\newtheorem{theorem}{Theorem}
\newtheorem{lemma}{Lemma}
\newtheorem{remark}{Remark}
\begin{document}
%
\title{Multi-Cell Massive MIMO Performance with Double Scattering Channels}

\author{\IEEEauthorblockN{ Trinh Van Chien, Emil Bj\"{o}rnson, and Erik G. Larsson}
\IEEEauthorblockA{Department of Electrical
    Engineering (ISY), Link\"{o}ping University, SE-581 83 Link\"{o}ping, Sweden\\
\{trinh.van.chien, emil.bjornson, erik.g.larsson\}@liu.se}
\thanks{This paper was supported by the European Union's Horizon 2020 research and innovation programme under grant agreement No 641985 (5Gwireless). It was also supported by ELLIIT and CENIIT.}
}


%


\maketitle

\begin{abstract}
This paper investigates the spectral efficiency (SE) of multi-cell Massive Multi-Input Multi-Output (MIMO) using different channel models. Prior works have derived closed-form SE bounds and approximations for Gaussian distributed channels, while we consider the \emph{double scattering model}---a prime example of a non-Gaussian channel for which it is intractable to obtain closed form SE expressions. The channels are estimated using limited resources, which gives rise to pilot contamination, and the estimates are used for linear detection and to compute the SE numerically. Analytical and numerical examples are used to describe the key behaviors of the double scattering models, which differ from conventional Massive MIMO models. Finally, we provide multi-cell simulation results that compare the double scattering model with uncorrelated Rayleigh fading and explain under what conditions we can expect to achieve similar SEs.
\end{abstract}


%
\IEEEpeerreviewmaketitle

\section{Introduction}
\vspace{-0.10cm}
Massive MIMO has emerged as a key concept for future wireless access due to its potentials to increase both energy efficiency and spectral efficiency (SE) \cite{Marzetta2010a, Ngo2013a}. The communication theoretic foundation has been established under the assumption that the channels between transmitters and receivers are exposed to rich scattering that   can be modeled by uncorrelated Rayleigh fading. The main merit of this channel model is that system performance can be studied in detail using closed form lower bounds on the ergodic capacity, which are obtained by computing the moments of Gaussian distributions \cite{Bjornson2016b}. The channel hardening and favorable propagation properties of Massive MIMO can then be proved analytically \cite{Ngo2014a}. A few works \cite{Huh2012a,Hoydis2013a} have also considered correlated Rayleigh fading. However, real propagation channels are likely to be non-Gaussian distributed and it is thus important to be able to evaluate in SE also for such practical channels, although it is not easy to obtain closed-form expressions.  

The authors in \cite{Ngo2016a} proved mathematically that the channel hardening property does not hold for keyhole channels and SE is degraded by the rank deficiency of the channel correlation matrix, which is characterized by the number of keyholes. One of the most versatile stochastic channel models is the double scattering model \cite{Gesbert2002a}, which utilizes geometry of the propagation environment to model spatial fading correlation, rank-deficiency, limited scattering, etc. Thanks to its flexibility, this model has gained lots of interest for MIMO communications; see \cite{Shin2003a, Hoydis2011e} and references therein. This model has recently been extended for use in Massive MIMO \cite{Wu2014a,Li2015b}. The paper \cite{Wu2014a} considered non-stationary channels where the scattering cluster evolution is modeled by birth-death processes. The paper \cite{Li2015b} studied an upper bound on the single-cell Massive MIMO performance for non-wide sense stationary channels, based on the concepts of partially and wholly visible scattering clusters. However, none of these papers considered practical multi-cell Massive MIMO systems with linear processing and pilot contamination.

In this paper, we describe how to apply the double scattering channel model from \cite{Gesbert2002a} to multi-cell Massive MIMO systems with linear detection. A general uplink SE expression for arbitrary non-Gaussian channels and linear detection is derived. The BS obtains channel state information (CSI) from uplink pilot transmissions, that are exposed to pilot contamination, and applies linear minimum mean squared error (LMMSE) estimation techniques that do not require the exact channel statistics. The key behaviors of the double scattering model (e.g., spatial correlation and favorable propagation) and its impact on the SE are analyzed and illustrated by numerical examples. We consider the linear techniques maximum ratio (MR), zero forcing (ZF), and minimum mean squared error (MMSE) for signal detection. In particular, we compare the results with the uncorrelated Rayleigh fading model.

\textit{Notation:}  We use the upper-case bold face letters for matrices while lower-case bold face ones are used for vectors. $\mathbf{I}_M$, $\mathbf{I}_K$, and $\mathbf{I}_S$ are respectively the identity matrix of size $M \times M$,  $K \times K$ and $S \times S$. The operator $\mathbb{E} \left\{ \cdot \right\}$ represents  the expectation of a random variable. The notation $ \| \cdot \| $ stands for the Euclidean norm. The regular and Hermitian transposes are denoted by $(\cdot)^T$ and  $(\cdot)^H$, respectively. Finally, $\mathcal{CN}(\cdot,\cdot)$ represents the circularly symmetric complex Gaussian distribution.

\section{Massive MIMO Cellular Systems} \label{Section:System-Model and Achievable Performance}
\vspace{-0.10cm}
 A Massive MIMO system with $L$ cells is studied. Each cell comprises a BS with $M$ antennas and serves $K$ single-antenna users in the same time-frequency resource. The network operation is divided into channel coherence intervals of length $\tau_c$ symbols. In each coherence interval, the channel $\mathbf{h}_{i,t}^l \in \mathbb{C}^{M}$ between user $t$ in cell $i$ and BS $l$ is constant and flat-fading (for any $i,t,l$), while independent channel realizations are assumed between coherence intervals. Moreover, $\tau_p$ symbols are used for pilot signaling and the remaining $\tau_c - \tau_p$ symbols are dedicated to data transmission. Due to space limitations, we only focus on uplink (UL) transmission in this paper, but the results can be extended to the downlink using a time division duplex (TDD) protocol that exploits channel reciprocity.

\subsection{Uplink Channel Estimation}
\vspace{-0.10cm}
Each BS needs to know the channels to its users to make efficient use of its $M$ antennas. For channel estimation purposes, we assume that all users simultaneously transmit pilot sequences of length $\tau_p$, with $\tau_p = f K$, where the positive integer $f$ denotes the pilot reuse factor. Orthogonal pilots are used within each cell, while users in different cells may use the same pilot if the length of the pilot sequences is less than the total number of users in the network, i.e., $\tau_p < LK$. Because $f < L$, the pilot sequences are reused across the cells in the network. The consequences of such pilot contamination are described later. The received pilot signal $\mathbf{Y}_l \in \mathbb{C}^{M \times \tau_p}$ at BS $l$ is expressed
\begin{equation} \label{eq: UL-Recieved-Pilot-Matrix}
\mathbf{Y}_l = \sum_{i=1}^L \mathbf{H}_i^l \mathbf{P}_i^{1/2} \pmb{ \Phi }_i^H + \mathbf{N}_l,
\end{equation}
where the channel matrix  $\mathbf{H}_i^l = [\mathbf{h}_{i,1}^l, \ldots,  \mathbf{h}_{i,K}^l ] \in \mathbb{C}^{M \times K}$ has the column vectors $\mathbf{h}_{i,k}^l$  each of which denotes the channel between user $k$ in cell $i$ and BS $l$, for $k=1,\ldots,K$, $i =1,\ldots, L$, and $l =1,\ldots, L$.  The orthogonality of pilot sequences in a cell implies that the $\tau_p \times K$ pilot matrix used in cell $i$, $\pmb{\Phi}_i =  [\pmb{\phi}_{i,1}, \ldots , \pmb{\phi}_{i,K}]$, satisfies $\pmb{\Phi}_i^H \pmb{\Phi}_i = \tau_p \mathbf{I}_K$. The pilots are divided into $f$ distinct groups and each cell belongs to one such group. If cell $i$ and cell $l$ use the same pilot sequences, we also have $\pmb{\Phi}_i^H \pmb{\Phi}_l = \tau_p \mathbf{I}_K$. Otherwise, $\pmb{\Phi}_i^H \pmb{\Phi}_l = \mathbf{0}$ meaning that these cells use different orthogonal pilot sequences. Let us denote by $\mathcal{P}_{l} \subset \{1,\ldots, L\}$ the indices of the cells employing the same pilot sequence as cell $l$. Additionally, we let $p_{i,k}$ denote the transmit power of user $k$ in cell $i$ and define the diagonal power matrix $\mathbf{P}_i = \mathrm{diag} (p_{i,1}, \ldots, p_{i,K}) \in \mathbb{C}^{K \times K}$. Finally,  $\mathbf{N}_l \in \mathbb{C}^{M \times \tau_p}$ is a noise matrix with independent entries having the distribution $\mathcal{CN} (0, \sigma_{\mathrm{UL}}^2 )$. 

We evaluate and compare the performance of Massive MIMO using different channel models, thus no particular channel model is assumed at this point. However, for sake of simplicity,  we only cover non line-of-sight channels in this paper, i.e., they have zero mean. Each BS is assumed to know the first and second order moments of the channels from all users, while the statistical distribution is unknown (as is typically the case in practice). Based on the received pilot signal in \eqref{eq: UL-Recieved-Pilot-Matrix}, BS $l$ can then apply LMMSE estimation \cite{Kay1993a} to obtain an estimate $\hat{ \mathbf{h} }_{l,k}^l$ of $\mathbf{h}_{l,k}^l$.
A sufficient statistic for estimating $\mathbf{h}_{l,k}^l$ is
\begin{equation}
\begin{split} \label{eq:UL-Recieved-Pilot-Matrix-processed}
\mathbf{Y}_l \pmb{\phi}_{l,k}   &= \mathbf{y}_{l,k}^l = \sum_{i=1}^L \mathbf{H}_i^l \mathbf{P}_i^{1/2} \pmb{ \Phi }_i^H \pmb{\phi}_{l,k}  + \mathbf{N}_l \pmb{\phi}_{l,k} \\
&= \sum_{ i \in \mathcal{P}_l } \sqrt{p_{i,k}} \tau_p \mathbf{h}_{i,k}^l + \tilde{ \mathbf{n} }_{l,k}^l,
\end{split}
\end{equation}
where $\tilde{ \mathbf{n} }_{l,k}^l = \mathbf{N}_l \pmb{\phi}_{l,k} \sim \mathcal{CN}( \mathbf{0}, \tau_p \sigma_{\mathrm{UL}}^2 \mathbf{I}_M )$. 
Note that \eqref{eq:UL-Recieved-Pilot-Matrix-processed} contains a summation $\sum_{ i \in \mathcal{P}_l } \mathbf{h}_{i,k}^l$ of the channels from users that transmitted the same pilot sequence $\pmb{\phi}_{l,k}$.

\begin{lemma} \label{Lemma:EstimatedChannel}
The LMMSE estimate of the channel between user $k$ in cell $l$ and BS $l$,  $\hat{\mathbf{h}}_{l,k}^{l}$, is 
\begin{equation} \label{eq:ChannelEstimate}
\hat{\mathbf{h}}_{l,k}^{l} = \mathbf{B}_{l,k}^l \mathbf{y}_{l,k}^l = \mathbf{B}_{l,k}^l \left( \sum_{i \in \mathcal{P}_l } \sqrt{p_{i,k}} \tau_p \mathbf{h}_{i,k}^l + \tilde{ \mathbf{n} }_{l,k} \right),
\end{equation}
where $\mathbf{B}_{l,k}^l =  \mathbf{Cov} \{ 
\mathbf{h}_{l,k}^l, \mathbf{y}_{l,k}^l \} (\mathbf{Cov}\{ 
\mathbf{y}_{l,k}^l, 
\mathbf{y}_{l,k}^l \} )^{-1}$ and the cross-correlation matrix $\mathbf{Cov} \{ 
\mathbf{h}_{l,k}^l, \mathbf{y}_{l,k}^l \}$ and the auto-correlation matrix $\mathbf{Cov}\{ 
\mathbf{y}_{l,k}^l, 
\mathbf{y}_{l,k}^l \}$ are defined as
\begin{align}
\mathbf{Cov} \{ 
\mathbf{h}_{l,k}^l, \mathbf{y}_{l,k}^l \} &=  \sqrt{p_{l,k}} \tau_p \mathbb{E} \{ \mathbf{h}_{l,k}^l (\mathbf{h}_{l,k}^l)^H \}\\
\mathbf{Cov}\{ 
\mathbf{y}_{l,k}^l, 
\mathbf{y}_{l,k}^l \} &= \tau_p^2  \sum_{i \in \mathcal{P}_l }p_{i,k} \mathbb{E} \{ \mathbf{h}_{i,k}^l (\mathbf{h}_{i,k}^l)^H \} +  \tau_p \sigma_{\mathrm{UL}}^2  \mathbf{I}_{M}.
\end{align}
\end{lemma}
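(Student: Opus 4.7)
The plan is to invoke the standard LMMSE estimator formula for zero-mean random vectors and then reduce the two covariance matrices to the stated forms by exploiting the independence assumptions built into the model. Recall that, for jointly distributed zero-mean random vectors $\mathbf{x}$ and $\mathbf{y}$ with finite second moments, the linear estimator of $\mathbf{x}$ from $\mathbf{y}$ that minimises the mean-squared error is $\hat{\mathbf{x}} = \mathbf{Cov}\{\mathbf{x},\mathbf{y}\}(\mathbf{Cov}\{\mathbf{y},\mathbf{y}\})^{-1}\mathbf{y}$. Setting $\mathbf{x} = \mathbf{h}_{l,k}^{l}$ and $\mathbf{y} = \mathbf{y}_{l,k}^{l}$ already yields the estimator structure asserted in \eqref{eq:ChannelEstimate}, so the only remaining work is to evaluate the two covariance matrices from the sufficient statistic derived in \eqref{eq:UL-Recieved-Pilot-Matrix-processed}.

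For the cross-covariance, I would substitute \eqref{eq:UL-Recieved-Pilot-Matrix-processed} and use linearity of expectation to write $\mathbf{Cov}\{\mathbf{h}_{l,k}^{l},\mathbf{y}_{l,k}^{l}\}$ as a sum over $i \in \mathcal{P}_l$ of $\sqrt{p_{i,k}}\tau_p \mathbb{E}\{\mathbf{h}_{l,k}^{l}(\mathbf{h}_{i,k}^{l})^H\}$ plus a noise cross-term. Because channels of different users (whether in the same or different cells) are independent and the noise $\tilde{\mathbf{n}}_{l,k}^{l}$ is independent of all channels, every term with $i\neq l$ vanishes and so does the noise term, leaving exactly the stated expression $\sqrt{p_{l,k}}\tau_p\mathbb{E}\{\mathbf{h}_{l,k}^{l}(\mathbf{h}_{l,k}^{l})^H\}$.

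For the auto-covariance, I would expand $\mathbb{E}\{\mathbf{y}_{l,k}^{l}(\mathbf{y}_{l,k}^{l})^H\}$ into a double sum over $\mathcal{P}_l\times\mathcal{P}_l$ plus the noise contribution. The same independence argument eliminates the off-diagonal (cross-user) terms, leaving only the diagonal sum $\tau_p^2\sum_{i\in\mathcal{P}_l} p_{i,k}\mathbb{E}\{\mathbf{h}_{i,k}^{l}(\mathbf{h}_{i,k}^{l})^H\}$. The noise term contributes $\mathbb{E}\{\tilde{\mathbf{n}}_{l,k}^{l}(\tilde{\mathbf{n}}_{l,k}^{l})^H\} = \tau_p\sigma_{\mathrm{UL}}^2\mathbf{I}_M$, which follows directly from the covariance of $\tilde{\mathbf{n}}_{l,k}^{l}$ given after \eqref{eq:UL-Recieved-Pilot-Matrix-processed}.

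There is no genuine obstacle: the argument is bookkeeping on top of a textbook estimator formula. The only subtlety worth flagging is that the LMMSE estimator depends solely on first and second moments (not on the full distribution), which matches the standing assumption that each BS knows these moments even when the channel distribution itself is non-Gaussian; consequently the same formula applies uniformly to all channel models considered later in the paper, including the double scattering model.
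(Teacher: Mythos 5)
Your proposal is correct and matches the paper's approach: the paper's proof simply cites the standard LMMSE formula from Kay's textbook, and your computation of the two covariance matrices (expanding the sufficient statistic and using independence across users and of the noise) is exactly the bookkeeping that citation implies. Your closing remark that the estimator depends only on first and second moments is precisely why the lemma applies to the non-Gaussian double scattering channels.
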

\begin{proof}
	The proof follows directly from the LMMSE techniques in \cite{Kay1993a}.
\end{proof}

We stress that the estimator in Lemma \ref{Lemma:EstimatedChannel} is very general since it applies to any channel with zero mean; in particular, the channel does not have to be Gaussian distributed, as otherwise assumed in most of the Massive MIMO literature. This implies that $\hat{\mathbf{h}}_{l,k}^{l}$ and the estimation error $\mathbf{h}_{l,k}^{l}- \hat{\mathbf{h}}_{l,k}^{l}$ are uncorrelated, but generally not independent. The LMMSE estimator above includes the pilot contamination that often occurs in multi-cell systems. 
The LMMSE channel estimates in cell $l$ are written in the compact form as $\widehat{\mathbf{H}}_{l}^l = [ \hat{\mathbf{h}}_{l,1}^l, \ldots, \hat{\mathbf{h}}_{l,K}^l]^T \in \mathbb{C}^{M \times K}$. 
These will be used for UL detection.

\subsection{Uplink Data Transmission Model}
\vspace{-0.10cm}
 We assume that the arbitrary user $t$ in cell $i$ transmits the signal $x_{i,t} \in \mathbb{C}$, having unit power $\mathbb{E} \{ |x_{i,t}|^2\} = 1$.  For the UL data transmission, the received signal at BS $l$ is modeled as
\begin{equation} \label{eq: UL-Signal}
\begin{split}
 \mathbf{y}_l = \sum_{i=1}^L \sum_{t=1}^K \sqrt{p_{i,t}}\mathbf{h}_{i,t}^l x_{i,t} + \mathbf{n}_l,
 \end{split}
\end{equation}
where  $\mathbf{n}_l \sim \mathcal{CN}(0, \sigma_{\mathrm{UL}}^2 \mathbf{I}_M )$ is additive noise. By using the detection vector $\mathbf{v}_{l,k} \in \mathbb{C}^M$, user $k$ in cell $l$ can detect the transmitted signal $x_{l,k}$ as
\begin{equation}\label{eq: UL-SignalCombining}
\begin{split}
\mathbf{v}_{l,k}^H \mathbf{y}_{l} &=  \sqrt{ p_{l,k} } \mathbf{v}_{l,k}^H \mathbf{h}_{l,k}^l  x_{l,k} + \sum\limits_{\substack{t = 1\\t \neq k}}^{K} \sqrt{ p_{l,t} } \mathbf{v}_{l,k}^H \mathbf{h}_{l,t}^l x_{l,t} \\
& \; \; \; + \sum\limits_{ \substack{i =1 \\ i \neq l} }^{L} \sum\limits_{ t =1 }^{K} \sqrt{ \rho_{i,t} } \mathbf{v}_{l,k}^H \mathbf{h}_{i,t}^l x_{i,t} +  \mathbf{v}_{l,k}^H \mathbf{n}_l. 
\end{split}
\end{equation}
The first term in \eqref{eq: UL-SignalCombining} is the desired signal that BS $l$ receives from user $k$. The second term is intra-cell interference that comes from the other users in cell $l$. The third term is inter-cell interference while the last one is additive noise. 
From \eqref{eq: UL-SignalCombining}, a lower bound on the ergodic capacity of user $k$ in cell $l$ is given in Theorem \ref{Theorem-Lower-Bound-Rate}. 

\begin{theorem} \label{Theorem-Lower-Bound-Rate}
 A lower bound on the UL ergodic capacity of user $k$  in cell $l$ is given by
\begin{equation} \label{eq: Sum-Rate-k}
R_{l,k}^{\mathrm{UL}} =  \left( 1 - \frac{\tau_p}{\tau_c} \right) \log_2 \left(1 + \mathrm{SINR}_{l,k}^{\mathrm{UL}} \right) \quad \textrm{[bit/s/Hz]},
\end{equation}
where the effective signal-to-noise-and-interference (SINR) value, $\mathrm{SINR}_{l,k}^{\mathrm{UL}}$, is
\begin{equation} \label{SINR:UL}
\fontsize{9.5pt}{9.5}{\frac{p_{l,k} | \mathbb{E} \{ \mathbf{v}_{l,k}^H \mathbf{h}_{l,k}^l \}|^2}{ \sum\limits_{i=1}^L \sum\limits_{t=1}^K p_{i,t} \mathbb{E} \{ | \mathbf{v}_{l,k}^H \mathbf{h}_{i,t}^l |^2 \} - p_{l,k} | \mathbb{E} \{ \mathbf{v}_{l,k}^H \mathbf{h}_{l,k}^l \}|^2 + \sigma_{\mathrm{UL}}^2  \mathbb{E} \{ \| \mathbf{v}_{l,k} \|^2 \} }.}
\end{equation}
\end{theorem}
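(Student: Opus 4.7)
The plan is to apply the well-known use-and-then-forget (UatF) capacity bound that has become standard in the Massive MIMO literature, but to verify that every step goes through without requiring the channels to be Gaussian (since the point of the paper is to cover the double-scattering model). First I would rewrite the scalar soft estimate $\mathbf{v}_{l,k}^H\mathbf{y}_l$ from \eqref{eq: UL-SignalCombining} by adding and subtracting the deterministic quantity $\sqrt{p_{l,k}}\,\mathbb{E}\{\mathbf{v}_{l,k}^H\mathbf{h}_{l,k}^l\}\,x_{l,k}$, so that it splits into a known-gain desired term
\begin{equation*}
\sqrt{p_{l,k}}\,\mathbb{E}\{\mathbf{v}_{l,k}^H\mathbf{h}_{l,k}^l\}\,x_{l,k}
\end{equation*}
plus an effective-noise term collecting (i) the self-interference $\sqrt{p_{l,k}}(\mathbf{v}_{l,k}^H\mathbf{h}_{l,k}^l-\mathbb{E}\{\mathbf{v}_{l,k}^H\mathbf{h}_{l,k}^l\})x_{l,k}$, (ii) intra- and inter-cell interference, and (iii) the thermal-noise contribution $\mathbf{v}_{l,k}^H\mathbf{n}_l$.

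The next step is to check that this effective noise is uncorrelated with the desired signal $x_{l,k}$. By construction the self-interference term has zero conditional mean given $x_{l,k}$ (since $x_{l,k}$ is independent of $\mathbf{v}_{l,k}$ and $\mathbf{h}_{l,k}^l$, and $\mathbb{E}\{x_{l,k}\}=0$ after the deterministic part has been extracted), while the other users' data symbols and the noise are independent of $x_{l,k}$ and of zero mean. Hence the effective-noise variance equals the sum of the individual second moments, which is precisely
\begin{equation*}
\sum_{i=1}^L\sum_{t=1}^K p_{i,t}\,\mathbb{E}\{|\mathbf{v}_{l,k}^H\mathbf{h}_{i,t}^l|^2\}-p_{l,k}|\mathbb{E}\{\mathbf{v}_{l,k}^H\mathbf{h}_{l,k}^l\}|^2+\sigma_{\mathrm{UL}}^2\,\mathbb{E}\{\|\mathbf{v}_{l,k}\|^2\},
\end{equation*}
matching the denominator in \eqref{SINR:UL}.

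Having an input-output channel of the form $y=\alpha x+w$ with $x$ independent of $w$, $\mathbb{E}\{xw^*\}=0$, $\mathbb{E}\{|x|^2\}=1$, and $\alpha$ a deterministic constant, I would then invoke the standard worst-case-noise result (see e.g.\ Medard; Hassibi--Hochwald) stating that among all additive noises with a prescribed variance and uncorrelated with the Gaussian input, the complex circularly symmetric Gaussian one minimizes the mutual information, and therefore the mutual information is at least $\log_2(1+|\alpha|^2/\mathrm{Var}(w))$. Substituting $\alpha=\sqrt{p_{l,k}}\,\mathbb{E}\{\mathbf{v}_{l,k}^H\mathbf{h}_{l,k}^l\}$ and the variance computed above yields $\log_2(1+\mathrm{SINR}_{l,k}^{\mathrm{UL}})$ instantaneously available per channel use, and scaling by the fraction $(1-\tau_p/\tau_c)$ of the coherence interval devoted to data gives \eqref{eq: Sum-Rate-k}.

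The only delicate point, and really the sole place where care is needed compared to the standard Gaussian-channel version of the argument, is the uncorrelatedness check: since the channels are no longer Gaussian, I cannot appeal to independence of LMMSE estimate and estimation error, only to their uncorrelatedness noted after Lemma \ref{Lemma:EstimatedChannel}. But this is already enough, because the UatF lower bound relies only on second-order statistics of the effective noise and its being uncorrelated with $x_{l,k}$; Gaussianity of the channels is never invoked. That observation is what makes the theorem applicable to the double-scattering model considered later in the paper.
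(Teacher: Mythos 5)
Your proposal is correct and takes essentially the same route as the paper, which simply defers to \cite{Chien2016b} for the standard use-and-then-forget argument: extract the deterministic average gain $\sqrt{p_{l,k}}\,\mathbb{E}\{\mathbf{v}_{l,k}^H\mathbf{h}_{l,k}^l\}$, treat everything else as uncorrelated effective noise, and apply the worst-case (Gaussian) uncorrelated-noise bound under Gaussian signaling, with the prelog accounting for the pilot overhead. One minor wording slip: the effective noise is \emph{not} independent of $x_{l,k}$ (the self-interference term contains $x_{l,k}$ as a factor), only uncorrelated with it --- but since, as you correctly note, the worst-case-noise lemma requires only $\mathbb{E}\{x w^*\}=0$ and second-order statistics, the argument goes through unchanged.
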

\begin{proof}
We obtain this bound on the ergodic capacity by assuming Gaussian signaling and applying a series of lower bounds that reduce the mutual information between the transmitted and received signals. A detailed proof that applies to channels of arbitrary distribution is available in \cite{Chien2016b}.
\end{proof}

The capacity bound in Theorem \ref{Theorem-Lower-Bound-Rate} is used as SE expression in this paper. While the expression resembles the SE expression derived for Rayleigh fading \cite{Bjornson2016a}, we stress that our expression is valid for a much wider range of channel models. The expectations in the expression can be computed numerically for any channel distribution and choice of linear detection vector $\mathbf{v}_{l,k}^l, \forall l,k$. We define the detection matrix at BS $l$ as $\mathbf{V}_{l} = [\mathbf{v}_{l,1}, \ldots, \mathbf{v}_{l,K}] \in \mathbb{C}^{M \times K}$ and consider the three most common detection techniques in Massive MIMO \cite{Ngo2013a}:
\begin{equation}
\mathbf{V}_{l} = \begin{cases}
\widehat{\mathbf{H}}_l^l & \mbox{for MR}, \\
\widehat{\mathbf{H}}_l^l \left( (\widehat{\mathbf{H}}_l^l)^H \widehat{\mathbf{H}}_l^l \right)^{-1} & \mbox{for ZF}, \\
\widehat{\mathbf{H}}_l^l \left( (\widehat{\mathbf{H}}_l^l)^H \widehat{\mathbf{H}}_l^l + \mathbf{P}_{l}^{-1}  \right)^{-1} & \mbox{for MMSE}. \\
\end{cases}
\end{equation}

\section{Double Scattering Model for Massive MIMO} \label{Section:Channel Modeling}
\vspace{-0.1cm}
This section describes how to adapt the double scattering model, initially proposed for point-to-point MIMO \cite{Gesbert2002a}, to multi-cell Massive MIMO systems. To motivate the use of that model, we first briefly review the conventional uncorrelated Rayleigh fading and its modeling assumptions.

\vspace*{-0.1cm}
\subsection{Uncorrelated Rayleigh Channels}\label{SubSec:UncorrelatedRayleigh}
\vspace{-0.2cm}
\begin{figure}[t]
	\centering
	\includegraphics[trim=0.5cm 1.0cm 0.5cm 1.0cm, clip=true, width=3in]{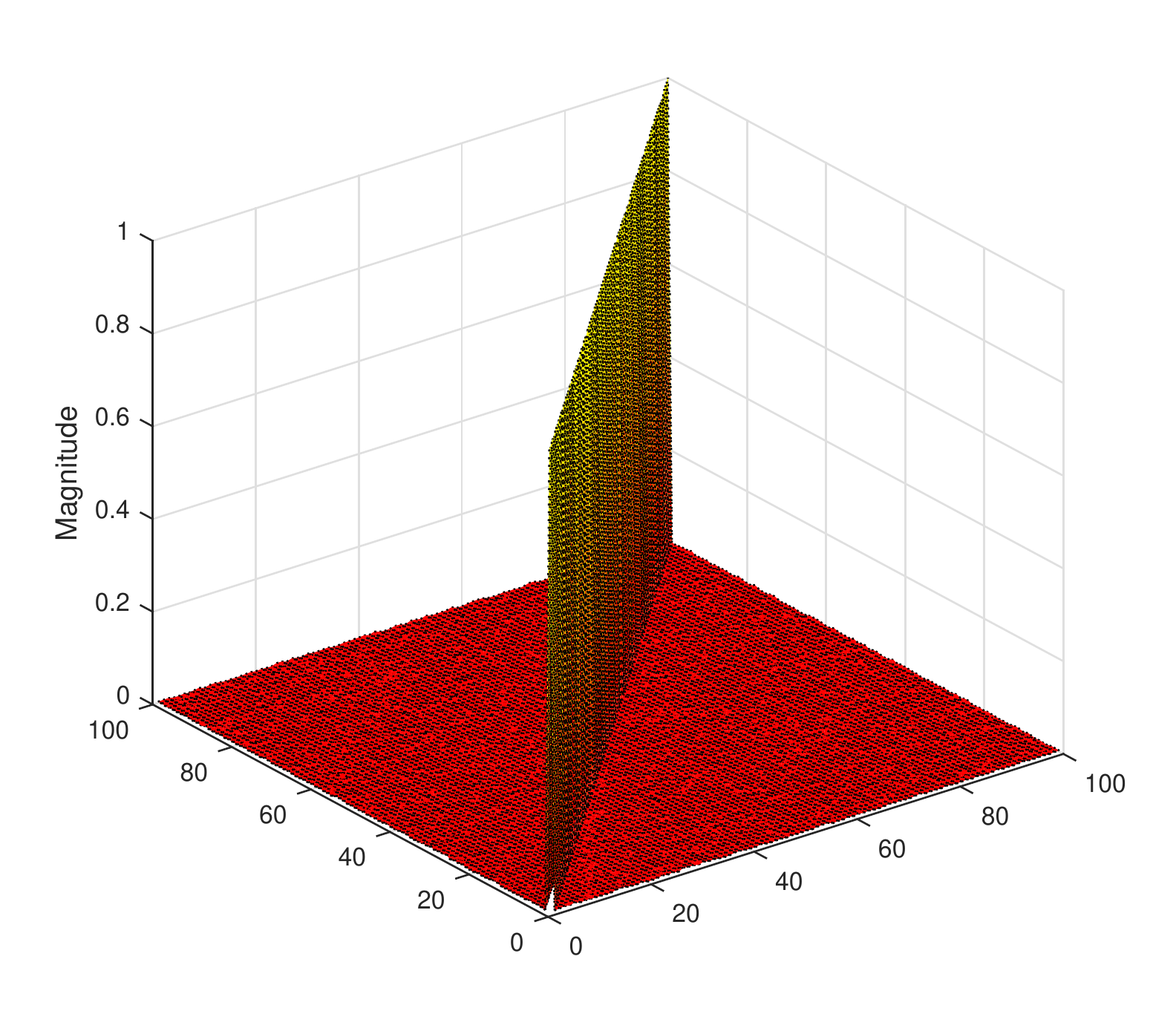} \vspace{-0.20cm}
	\caption{ The magnitude of $\mathbb{E} \{ \mathbf{h}_{l,k}^l (\mathbf{h}_{l,k}^l)^H \} / \beta_{l,k}^l$  of the uncorrelated Rayleigh fading channels. Here, $l=1, k =1, M = 100$, and the result is averaged over the $100,000$ realizations of the small-scale fading. }
	\label{fig:Rayleigh}
	\vspace{-0.20cm}
\end{figure}

In the uncorrelated Rayleigh fading model, the channel between user $k$ in cell $l$ and BS $l$ is distributed as
\begin{equation} \label{eq:UncorrelatedRayleigh}
 \mathbf{h}_{l,k}^l \sim \mathcal{CN}(\mathbf{0}, \beta_{l,k}^l \mathbf{I}_M).
 \end{equation}
 where $\beta_{l,k}^l$ represents the large-scale fading that describes macroscopic attenuation and shadowing. 
This model is popular in the Massive MIMO literature since a lower bound on the ergodic capacity can be computed in closed forms for MR and ZF \cite{Ngo2013a,Chien2016b}, utilizing the fact that 
moments of Gaussian distributions are analytically tractable.  Moreover, these channels manifest two fundamental properties in Massive MIMO: channel hardening and favorable propagation \cite{Ngo2014a}. 

Fig.~\ref{fig:Rayleigh} illustrates the meaning of uncorrelated fading by plotting the magnitude of $\mathbb{E} \{ \mathbf{h}_{l,k}^l (\mathbf{h}_{l,k}^l)^H \} / \beta_{l,k}^l$, based on $100,000$ Monte-Carlo realizations of the small-scale fading. The main diagonal becomes one and off-diagonal elements are all zero, which implies that the signal from the user has no dominant directivity.
There are two main issues with the uncorrelated Rayleigh fading model in Massive MIMO. First, spatially correlated fading has been observed in practical measurements \cite{Gao2015c} and it is rather intuitive that the main energy will arrive from around the spatial direction to the user. Second, uncorrelated Rayleigh fading is motivated by having a large number of scattering objects as compared to the number of antennas, which might not be the case when $M$ is large.

\subsection{Double Scattering Channels}
\vspace{-0.10cm}
\begin{figure}[t]
	\vspace{-0.20cm}
	\centering
	\includegraphics[trim=4cm 10.5cm 13cm 2.3cm, clip=true, width=2.5in]{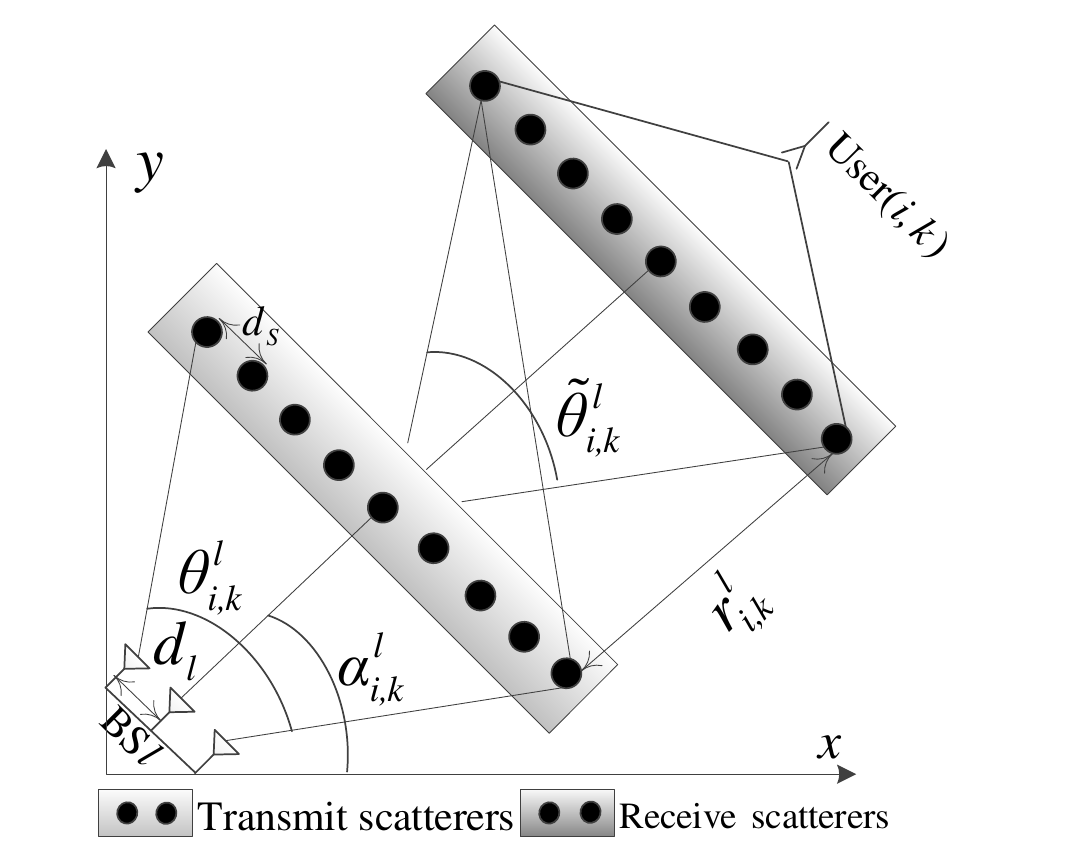} \vspace{-0.15cm}
	\caption{The geometric model of the double scattering channel between user $k$ in cell $i$ and BS $l$. }
	\label{fig:Doublescatteringstructure}
	\vspace{-0.40cm}
\end{figure} 

A main contribution of this paper is to apply the double scattering channel model to multi-cell Massive MIMO with linear detection and analyze the corresponding results. This channel model is a multivariate function  of the scattering distribution at both the BS and the user, the distance between them, and the spacing between the antennas. This model provides non-Gaussian channels with spatial correlation, and thus is a way to address the aforementioned issues of the uncorrelated Rayleigh fading model.

A schematic of the double scatting channels between an arbitrary user $k$ in cell $i$ and BS $l$ is depicted in Fig.~\ref{fig:Doublescatteringstructure} with several geometric parameters. 
The transmit and receive scattering clusters are respectively located at the BS and the user sides and the distance between them is $r_{i,k}^l$. There is a lack of scattering in between the two clusters which gives rise to spatial correlation and non-Gaussian fading. Meanwhile the distance between two scatterers is $d_S$. The angular spread between the transmit and receive scatterers is denoted by $\tilde{\theta}_{i,k}^l$. The distance between two BS antennas is $d_{l}$ and the angular spread is $\theta_{i,k}^l$. The azimuth angle between the BS antenna array and the transmit clusters is $\alpha_{i,k}^l$. Mathematically, the channel of user $k$ in cell $i$ and BS $l$ is formulated as
\begin{equation} \label{eq:DoubleScatteringChannel}
\mathbf{h}_{i,k}^l = \sqrt{\frac{\beta_{i,k}^l}{S}} (\mathbf{R}_{i,k}^{l})^{1/2} \mathbf{G}_{i,k}^{l} (\widetilde{\mathbf{R}}_{i,k}^l)^{1/2} \tilde{\mathbf{g}}_{i,k}^l,
\end{equation}
where $S$ is the number of scatterers at the BS side and at the user side. $\mathbf{G}_{i,k}^{l} \in  \mathbb{C}^{M \times S}$ describes the small-scale fading between the BS and its scattering cluster and $\tilde{\mathbf{g}}_{i,k}^l \in  \mathbb{C}^{S}$ describes the small-scale fading between the users and its scattering cluster. Similar to \cite{Gesbert2002a}, we assume the number of scatterers are sufficient large (but not necessarily larger than the number of BS antennas) to ensure that $\mathbf{G}_{i,k}^{l} \in  \mathbb{C}^{M \times S}$ and $\tilde{\mathbf{g}}_{i,k}^l \in  \mathbb{C}^{S}$  have entries following Gaussian distribution, $\mathcal{CN} (0,1)$. The correlation matrix $\mathbf{R}_{i,k}^{l} \in \mathbf{C}^{M \times M}$
between the BS antennas and the $S$ transmit scatterers has its $(m,m')$th element, $\forall m,m' = 1, \ldots, M,$  computed as \cite{Gesbert2002a}
\begin{equation}
\begin{split}
&[ \mathbf{R}_{i,k}^l ]_{m,m'} =\frac{1}{S} \times \\
&  \sum_{ n = (1-S)/2 }^{(S-1)/2} \exp \left\{ - 2 \pi j (m-m') d_l \cos \left( \frac{\pi}{2} + \alpha_{i,k}^l + \theta_{i,k,n}^l \right) \right\},
\end{split}
\end{equation}
where $ \theta_{i,k,n}^l$  denotes the angle spread between the $n$th scatterer and the BS antenna array. It is computed as \cite{Ertel1998a}
\begin{equation}
\theta_{i,k,n}^l = \frac{n \theta_{i,k}^l }{ S- 1}, \forall n =  \frac{1-S}{2}, \ldots, \frac{S-1}{2}.
\end{equation}
The correlation matrix $\mathbf{R}_{i,k}^l$ models the influence of the scattering distribution and spacing and beamforming of antennas. When the BS antenna separation $d_l$ becomes large the correlation matrix becomes uncorrelated \cite{Gesbert2002a}, i.e., $\mathbf{R}_{i,k }^l \rightarrow \mathbf{I}_M $, but in general it is not an identity matrix.

The correlation matrix between the transmit and receive scatterers is computed by considering them as virtual antenna arrays such that the angle spread $\tilde{\theta}_{i,k}^l$ is calculated as \cite{Gesbert2002a}
\begin{equation}
\tan \left( \frac{\tilde{\theta}_{i,k}^l}{2} \right) = \frac{d_S (S-1)}{ 2 r_{i,k}^l },
\end{equation}
and then the $(m,m')$th element, $\forall m,m' = 1, \ldots, S$, of the correlation matrix $\widetilde{\mathbf{R}}_{i,k }^l$ is computed as
\begin{equation} \label{eq:ScatteringCorrelationMatrix}
\begin{split}
&[ \widetilde{\mathbf{R}}_{i,k }^l ]_{m,m'} =\frac{1}{S} \times \\
&  \sum_{ n = (1-S)/2 }^{(S-1)/2} \exp \left\{ - 2 \pi j (m-m') d_{S} \cos \left( \frac{\pi}{2} + \alpha_{i,k}^l + \tilde{\theta}_{i,k,n}^l \right) \right\},
\end{split}
\end{equation}
where the angle $\tilde{\theta}_{i,k,n}^l$ between the scatterers is defined as \cite{Ertel1998a}
\begin{equation}
\tilde{\theta}_{i,k,n}^l = \frac{n \tilde{\theta}_{i,k}^l }{ S- 1}, \forall n =  \frac{1-S}{2}, \ldots, \frac{S-1}{2}.
\end{equation}
\begin{figure}[t]
	\centering
	\includegraphics[trim=0.5cm 0.7cm 0.5cm 1cm, clip=true, width=2.9in]{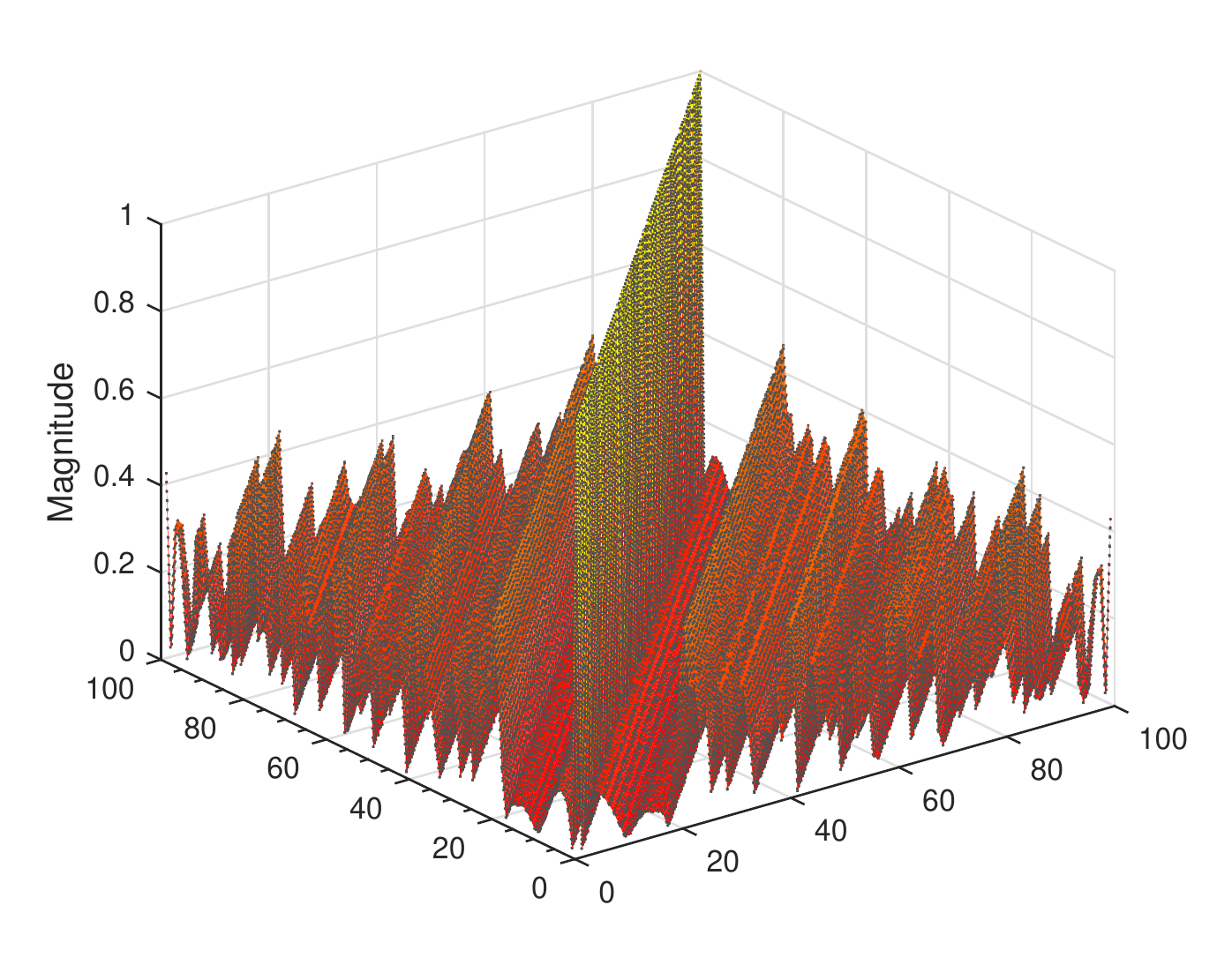} \vspace{-0.2cm}
	\caption{The correlation matrix $\mathbb{E} \{ \mathbf{h}_{l,k}^l (\mathbf{h}_{l,k}^l)^H \} / \beta_{l,k}^l $ of the double scattering channels
		is computed using $100,000$ realizations of the small-scale fading. Here, $l=1, k=1, M = 100, S= 21, d_l = 0.5, \theta_{l,k}^l = 2 \pi/3, \alpha_{l,k}^l = 0$.}
	\label{fig:DoubleScatteringChannelwith21Scatters}
	\vspace*{-0.4cm}
\end{figure}

The double scattering channel model combines three important aspects of Massive MIMO channel propagation, namely the rank-deficiency, the spatial fading correlation, and the signal attenuation by controlling the number of scatterers, the correlation matrices, and the large-scale fading coefficients. Furthermore the model spans scenarios from uncorrelated Rayleigh  to the keyhole channels, as shown in Remark~\ref{remark:asymptotic-cases}.

\begin{remark} \label{remark:asymptotic-cases}
	The uncorrelated Rayleigh fading is obtained from the double scattering model by setting $ \mathbf{R}_{i,k}^l = \mathbf{I}_M$ and $\widetilde{\mathbf{R}}_{ i,k }^l = \mathbf{I}_S$ and then letting $S \rightarrow \infty$, which are assumptions that give perfect antenna and scattering conditions with a high-rank correlation matrix. In contrast, the worst case of rank deficiency, i.e.,  $\widetilde{\mathbf{R}}_{ i,k }^l$ has rank $1$, is characterized by the cases of all entries of $\widetilde{\mathbf{R}}_{ i,k }^l$ being $1$. Moreover, $S =1$ yields the keyhole channels.
\end{remark}

To illustrate these behaviors, Fig.~\ref{fig:DoubleScatteringChannelwith21Scatters} illustrates the normalized correlation matrix $\mathbb{E} \{ \mathbf{h}_{l,k}^l (\mathbf{h}_{l,k}^l)^H \} / \beta_{l,k}^l $ for the double scattering model with $S=21$ scatterers. Different from the uncorrelated Rayleigh fading in Fig.~\ref{fig:Rayleigh}, we observe a clear spatial correlation where the off-diagonal elements are non-zero and follow a non-trivial pattern that describes the propagation environment. The corresponding correlation matrix with $S=81$ is shown in Fig.~\ref{fig:DoubleScatteringChannelwith81Scatterers}. The larger numbers of scatterers make the channel statistics closer to the uncorrelated Rayleigh fading case, but there are still distinct differences. Since the channel distribution is non-Gaussian, the lower bound on the UL ergodic capacity in Theorem \ref{Theorem-Lower-Bound-Rate} cannot be computed in closed-form but requires Monte-Carlo simulations.

The spatial correlation affects the favorable propagation properties, which is illustrated in Fig.~\ref{fig:ChannelHardeningDoubleScattering} by considering the average inner product $\mathbb{E} \{|(\mathbf{h}_{l,k}^l)^H (\mathbf{h}_{l,t}^l)| \} / (M \sqrt{ \beta_{l,k}^l \beta_{l,t}^l}) $ between the user channels. One user has a fixed azimuth angle of $0$ rad, while the angle of the other user is varied between $- \pi$ and $+ \pi$. The spatial correlation creates patterns since users with similar angles are more likely to have similar channel realizations. Interestingly, the variations are smoother as number of scatterers increases and the curves are closer to the uncorrelated Rayleigh fading.

\begin{figure}[t]
	\centering
	\includegraphics[trim=0.5cm 0.8cm 0.5cm 1.0cm, clip=true, width=2.9in]{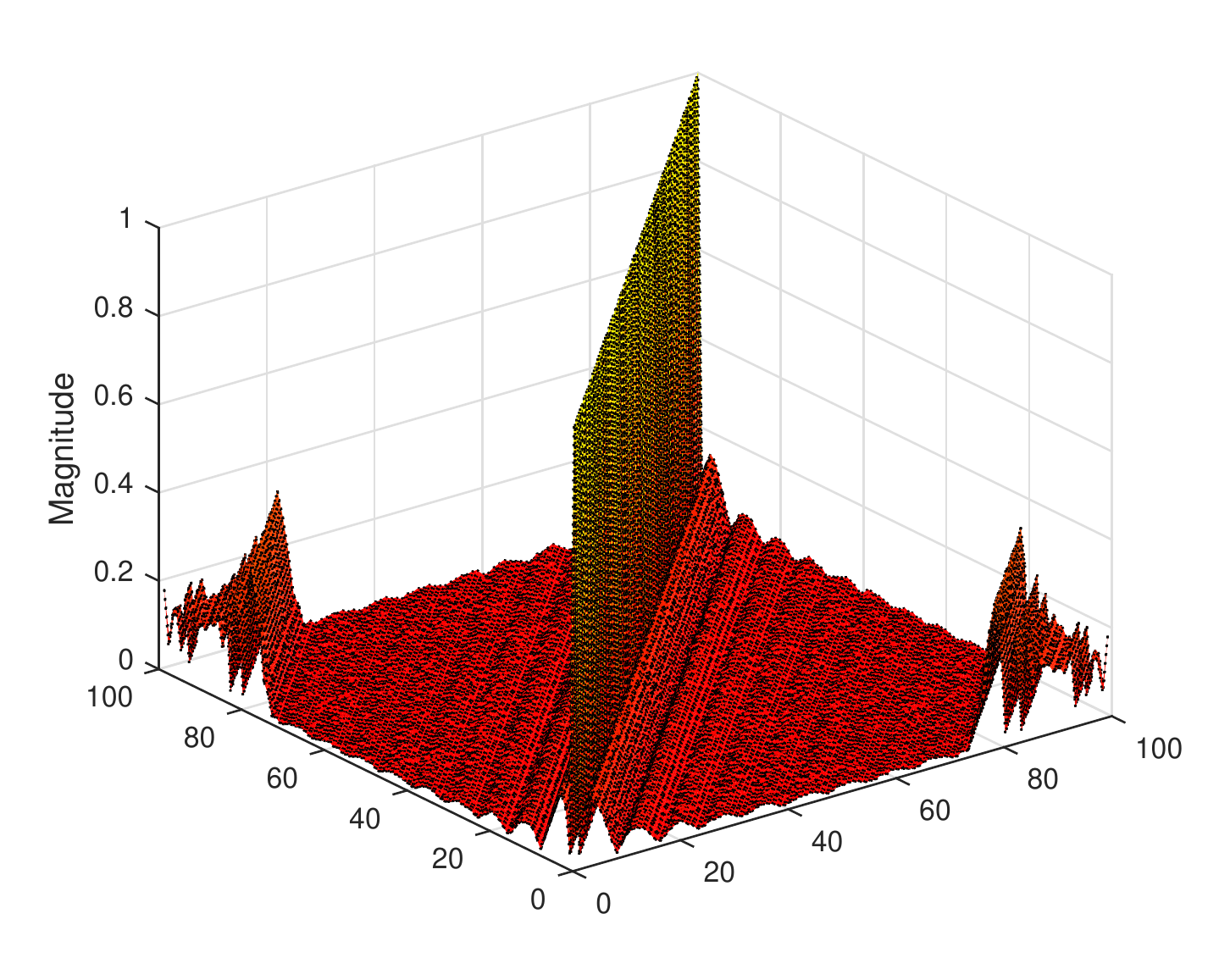} \vspace{-0.20cm}
	\caption{Same scenario as in Fig.~\ref{fig:DoubleScatteringChannelwith21Scatters} but for the number of scatters $S=81$.}
	\label{fig:DoubleScatteringChannelwith81Scatterers}
	\vspace*{-0.4cm}
\end{figure}

\begin{figure}[t]
	\centering
	\includegraphics[trim=4cm 9cm 5.0cm 9cm, clip=true, width=2.7in]{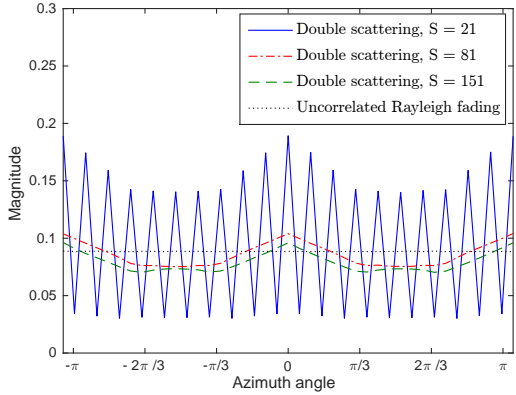} \vspace{-0.20cm}
	\caption{The inner product $\mathbb{E} \{|(\mathbf{h}_{l,k}^l)^H (\mathbf{h}_{l,t}^l)| \} / (M \sqrt{\beta_{l,k}^l \beta_{l,t}^l})$, where the azimuth angle of user $k$ is fixed at 0 rad while the azimuth angle of user $t$ varies in the range of $[- \pi, + \pi]$ rad with a step size of $\pi/20$.}
	\label{fig:ChannelHardeningDoubleScattering}
	\vspace*{-0.4cm}
\end{figure}

\section{Numerical Results} \label{Section:Numerical-Results}
\vspace{-0.2cm}
\begin{figure}[t]
	\centering
	\includegraphics[trim=9.5cm 9cm 9.5cm 2.5cm, clip=true, width=2.0in]{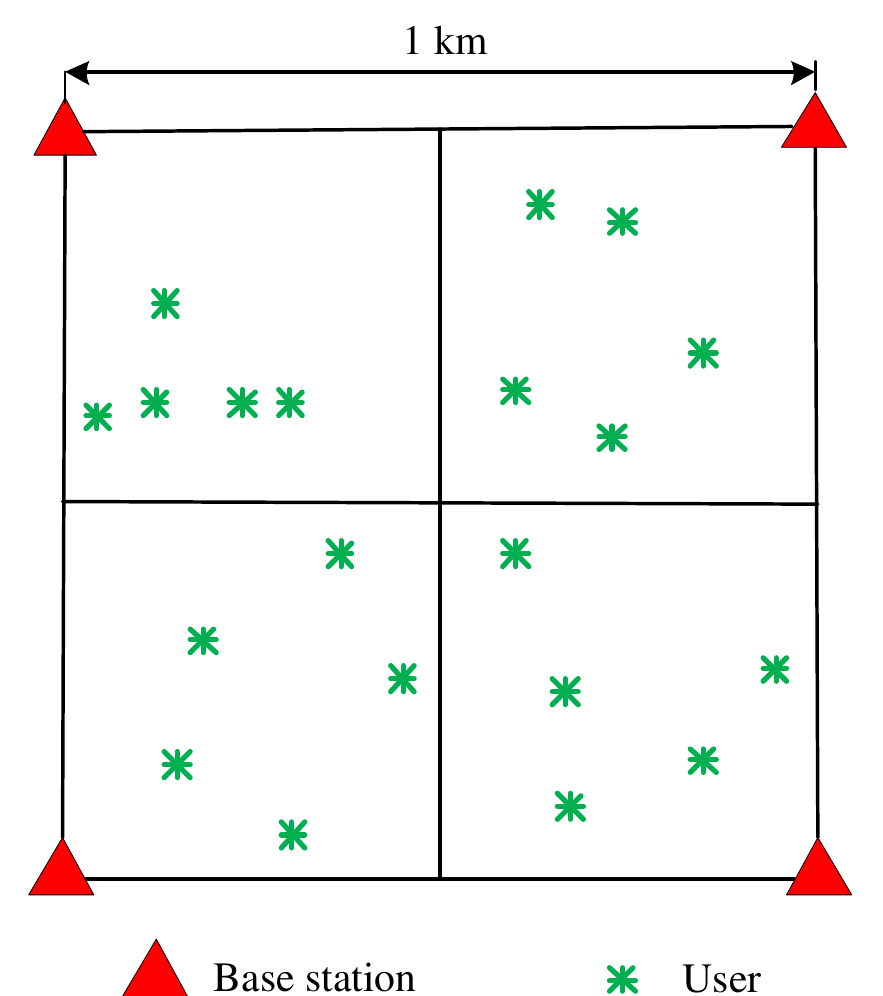} \vspace{-0.20cm}
	\caption{Multi-cell Massive MIMO system considered in the simulations.}
	\label{FigSystemLayout}
	\vspace{-0.30cm}
\end{figure}

In this section we provide numerical results to demonstrate the SE behavior when using the double scattering model presented in Section \ref{Section:Channel Modeling}. We consider a system with $4$ cells covering a square of size $1$ km$^2$ as shown in Fig.~\ref{FigSystemLayout}. The BSs are located at the four outer corners, to focus on the area that is jointly covered by the BSs. The users are uniformly and randomly distributed in the cell area, but the distance between a user and its serving BS is not less than $100$ m. There are $5$ users per cell: $K=5$. Each coherence interval has $200$ samples and a communication bandwidth of $20$ MHz is used. The noise variance is $-96$ dBm and the large-scale fading between user $k$ in cell $i$ and BS $l$ is modeled as
\begin{equation}
\beta_{i,k}^l = -128.1 - 37.6 \log b_{i,k}^l + z_{i,k}^l \; \mbox{(dB),}
\end{equation}
where $b_{i,k}^l$ measured in km is the distance between user $k$ in cell $i$ and BS $l$ and $z_{i,k}^l$ is the corresponding shadow fading, which is generated by the log-normal Gaussian distribution with standard derivation $7$ dB. For the double scattering model, we set the angular spread of BS antennas to $ \theta_{i,k}^l = 2 \pi/ 3$, which covers one sector in the cellular networks. The transmit scatterers are located at the distance of $0.2 b_{i,k}^l$ while the distance between  transmit and receive scatterers is $r_{i,k}^l = 0.7 b_{i,k}^l$. Moreover, the number of scatterers need to be large enough for small-scale fading to occur \cite{Gesbert2002a}; thus, we consider $S \in \{11, 21, 41\}$. The distance between two BS antennas is selected by using the carrier wavelength as a reference unit, $d_l \in \{ 0.1, 0.5, 1\}$. The number of BS antennas is $M = 100$. 

We assume that the cells use same orthogonal pilot sequences with the length $\tau_p = K$ (i.e., $f=1$). The users transmit at equal power and the median SNR at the cell edge is $-3$ dB. The channel models are normalized so that the average received power of a user is the same irrespective of the channel model. The performance is averaged over the $100$ random user locations and shadow fading realizations. For each set of user locations, the expectations in Theorem \ref{Theorem-Lower-Bound-Rate} are approximated using Monte-Carlo simulations with $1000$ realizations of the small-scale fading. 

\begin{figure}[t]
	\centering
	\includegraphics[trim=4cm 9cm 4.8cm 9cm, clip=true, width=2.9in]{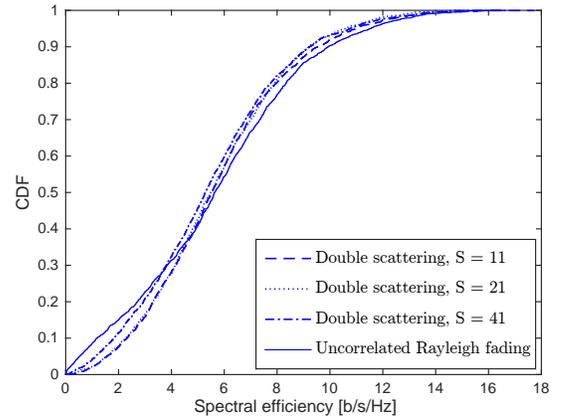} \vspace{-0.20cm}
	\caption{ Cumulative distribution function of UL spectral efficiency with MMSE detection and $d_l = 0.5$.}
	\label{FigCDFMMSEVariedScatters}
	\vspace{-0.40cm}
\end{figure}

Fig.~\ref{FigCDFMMSEVariedScatters} plots the cumulative distribution function (CDF) of the SE, using MMSE detection. The SE is very close to uncorrelated Rayleigh fading  even with a small number of scatterers (e.g., $S =11$). On average, the SE per user with $S=11$ is $5.65$ b/s/Hz while the system with uncorrelated Rayleigh fading gives $5.61$ b/s/Hz. Nonetheless, the difference between the two channel models can be more than $1$ b/s/Hz at the $95 \%$-likely SE point. Interestingly, in some realizations the double scattering model even provides better SE than uncorrelated Rayleigh fading. This is because the distributions of scatterers can make channels of two users more orthogonal than in uncorrelated Rayleigh fading.

\begin{figure}[t]
	\centering
	\includegraphics[trim=4cm 9cm 4.8cm 9cm, clip=true, width=2.9in]{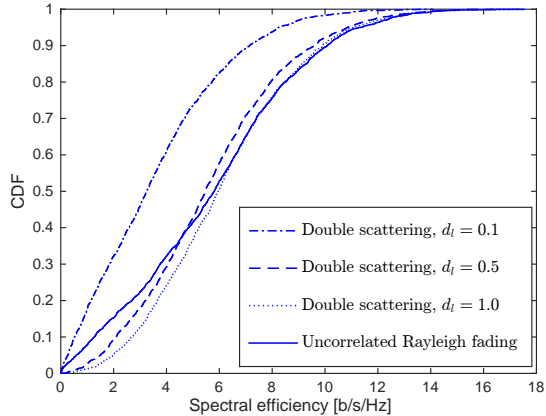} \vspace{-0.20cm} 
	\caption{Cumulative distribution function of UL spectral efficiency with different BS antenna spaces, $S= 21$, and MMSE detection.}
	\label{FigCDFVariedAntennaSpace}
		\vspace{-0.40cm}
\end{figure}

Fig.~\ref{FigCDFVariedAntennaSpace} shows the CDF of the SE for the systems with different BS antenna separations $d_l$. There exists a significant gap between $d_l = 0.1$
and the other curves ($d_l = 0.5$, $d_l = 1.0$, and uncorrelated Rayleigh). With $d_l =0.1$, on average the system can only provide SEs of about $3.11$ b/s/Hz,  while with $d_l = 0.5$ the average SE increases to $5.60$ b/s/Hz and is very similar to uncorrelated Rayleigh fading. Further antenna separation does not make any substantial difference. Consequently, we conclude that a BS antenna distance of half a carrier wavelength is enough to provide good performance.

Fig.~\ref{FigCDFVariedLinears} compares the SE using different linear detection techniques. ZF and MMSE are essentially equal and yield much higher performance that MR. For example, at $95 \%$-likely SE, ZF and MMSE can provide about $1.50$ b/s/Hz, but MR is only able to offer $0.13$ b/s/Hz. The equal performance of MMSE and ZF detection techniques is because $\mathbf{P}_l^{-1}$ is negligible compared to $(\widehat{\mathbf{H}}_l^l)^H \widehat{\mathbf{H}}_l^l$ when $M$ is large.

\begin{figure}[t]
	\centering
	\includegraphics[trim=4cm 9cm 4.8cm 9cm, clip=true, width=2.9in]{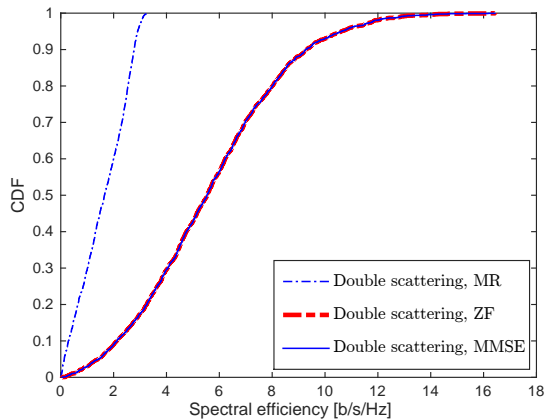} \vspace{-0.20cm}
	\caption{Cumulative distribution function of UL spectral efficiency with different linear detection techniques, $S= 21$, and $d_l = 0.5$.}
	\label{FigCDFVariedLinears}
	\vspace{-0.40cm}
\end{figure}

\section{Conclusion}
\vspace{-0.10cm}
The uncorrelated Rayleigh fading channel model has dominated the Massive MIMO literature, despite the fact that it is only justified in isotropic scattering environments. We have compared it with the double scattering channel model that may yield statistical properties far different from uncorrelated Rayleigh fading. The key differences of spatial correlation and favorable propagation behaviors were illustrated by simulations. Despite these important differences, the SE obtained with the two models are relatively similar even for a small number of scatterers, in particular if the antenna separation at the BS is at least half a carrier wavelength. Some users will obtain better SE with one model than the other, but the average performance is similar. Finally, in our simulations, ZF and MMSE yield similar performance while MR performs much worse.

\vspace{-0.30cm}





%

	\bibliographystyle{IEEEtran}
	\bibliography{IEEEabrv,refs}

\end{document}